\newtheorem{theorem}{Theorem}
\newtheorem{lemma}[theorem]{Lemma}
\newtheorem{corollary}[theorem]{Corollary}
\newtheorem*{thm1*}{Theorem 1}
\newtheorem*{thm2*}{Theorem 2}
\newtheorem*{prop*}{Proposition}
\newtheorem{prop}[theorem]{Proposition}
\newtheorem{definition}{Definition}
\begin{document}
\title {Computation in anonymous networks}

\author{Elchanan Mossel $^{1}$,  Anupam Prakash $^{2}$ and Gregory Valiant $^{3}$}

\thanks{$^{1}$ UC Berkeley, supported by  grants NSF-DMS-1106999  and DOD-ONR-N000141110140 \emph{E-mail:} mossel@stat.berkeley.edu}
\thanks{$^{2}$ Computer science division, UC Berkeley,  \emph{E-mail:} anupamp@eecs.berkeley.edu}
\thanks{$^{3}$ Computer science division, UC Berkeley,  \emph{E-mail:} gregory.valiant@gmail.com}

\begin{abstract}
We identify and investigate a computational model arising in molecular computing, social computing and sensor network. 
The model is made of of multiple agents who are computationally limited and posses no global information. 
The agents may represent nodes in a social network, sensors, or molecules in a molecular computer. 

Assuming that each agent is in one of $k$ states, we say that {\em the system computes} $f:[k]^{n} \to [k]$ if all agents eventually converge to the correct value of $f$. We present number of general results 
characterizing the computational power of the mode. We further present 
protocols for computing the plurality function with $O(\log k)$ memory and for approximately counting the number of nodes of a given color with $O(\log \log n)$ memory, where $n$ is the number of agents in the networks. These results are tight. 
\end{abstract}

\maketitle
\section{Introduction}
In this paper we present a common framework for studying computational problems involving multiple limited memory agents.
The framework is motivated by models of molecular computing, static and dynamic social networks and sensor networks. 
We study the computational power of the new framework: which functions can be computed by a molecular computer? by a dynamic social network? or by a sensor network? 

Given a function $f$ we interested in designing the basic computational model of identical bounded memory agents so that the resulting system will compute the desired function for {\em any} realization of the interaction between the agents. For example, in the case of molecular computers, we want the output to be correct answer no matter what order do the molecules interact with one another. Similarly, in the case of dynamic social networks, we wish that the system will compute the correct output no matter what the dynamics of the network and interaction is 
(as long as minimal connectivity requirements are preserved). 

The computational task is therefore to evaluate $f:[k]^{n} \to [k]$
where the inputs to $f$ are drawn from $[k]$ and are distributed among the $n$ agents in the network. 
We investigate the computational power of networks of anonymous agents where agents have no information about the size of the network, its structure 
or the identities of other agents.  In such a setting, it is natural to restrict the family of functions, $f,$ considered to be \emph{symmetric} functions, that only depend on the number of agents with each type.

The interaction between agents is modeled by a connectivity graph $G$ (that may change with time). The model is further parameterized 
by the amount of memory $m$ available to each agent and the rate of communication $\nu$ between neighbors.  
The interaction between neighbors is modeled as a Poisson process
of rate $\nu$.

\subsection{Motivation and related research}  
\subsubsection{Computation on Social Networks} 
Groundbreaking work by  Latan\'{e} and
L'Herrou~\cite{Latane:96} introduced a computational model where agents belong to a social network and their goal is to compute some specific function observing only the state of their neighbors. A key assumption in their experiment is the very limited signaling power (state space) of individual agents. A key insight of later work~\cite{KSM:06,KJTW:09,ECPW:09} is that in order to understand the behavior of human test subject, it is first necessary to understand the computational power of these models as abstract computational model. 

In~\cite{MosselSchoenebeck:10} a formal model for studying computation on social networks is introduced and the the memory complexity of computing  consensus and majority on connected networks is determined. Later work studies the convergence speed in terms of properties of the graph~\cite{DraiefVojnovic:12} (perhaps also~\cite{BabaeeDraief:12}).

\subsubsection{Sensor networks} 
Sensor networks share the main feature of the computational model above: the sensor do not have global information. Further, they are computationally limited. The question of the computational power of sensor networks is thus a natural one. While traditionally, the research have focused on computing real valued functions (such as averages) a recent research direction involves computing discrete valued functions such as the majority function~\cite{BeThVe:09}. 

\subsubsection{Molecular programing} 
Essentially the same model is studied in the area of molecular computing. Here, using chemical reaction networks with DNA, the goal is to compute a function of initial state (number of molecules of each type) using pairwise interaction between molecules~~\cite{QiSoWi:10,SoSeWi:10}.
A key goal in this area is the minimize the number of molecules involved in the interaction. This is the same as the amount of memory.

\subsubsection{Distributed Computing}
The model under discussion may sound similar to standard models in distributed computing, see e.g.~\cite{Peleg:00}. 
However, the model are quite different. 
The first difference is that in distributed computing it is never assumed that the nodes have very limited memory. In particular, our main interest is when the memory per agents is $o(\log n)$, where $n$ is the size of the network. Furthermore we cannot assume that nodes have unique i.d's. This prohibits the use of many of the standard algorithms in distributed computing. A second difference is that    distributed computing models have broadcast time on edges which does not follow a distribution. Instead it is only guaranteed to be bounded. Our models will assume that the broadcast times follow a Poisson process. 

\section{Model}

\subsection{Terminology} 
For the reminder of the paper we will assume that we are given a {\em fixed connect social network} with $n$ nodes. 
Our goal is to compute a function $f : [k]^n \to [k]$ where we view the input as the $n$ values stored at the $n$ different nodes. 
All of our results extend to dynamic networks as long as connectivity is maintained. To simplify notation we will assume that the network is fixed. The model from molecular computation correspond to the (easier) case of the complete graph.

\subsection{Results:}  

We are interested in computing symmetric functions $f : [k]^n \to [k]$. In the case where $k=2$ if $f:\{0,1\}^{n} \to \{0,1\}$ is symmetric then it just depends on the number of $0$'s in the network. 
 For the $k$-ary setting where $k$ different colors/properties are present, computing the plurality color and other comparison statistics are natural extensions of majority and symmetric functions for the binary case. Computation of symmetric functions is thus motivated by the problem of estimating
statistics over networks.

Given $O(\log n)$ memory of per agent---with high probably nodes can generate unique labels for themselves; computation over such networks has been studied extensively int the distributed computing literature.  Here, we seek to understand the the class of functions that can be computed by ``anonymous'', networks, and thus we restrict ourselves to the regime in which the available memory is $o(\log n).$   In the setting in which nodes are originally assigned binary labels, and the goal is to compute some symmetric function of the number of $1$'s, our protocols  use $O(\log \log n)$ memory per agent.  In the $k$-ary setting, we describe protocols that use $O(\log k)$ memory per agent. 

A counting argument shows that there exist symmetric functions that require $O(\log n)$ bits of memory per node to compute, hence not all symmetric functions can be computed 
by anonymous agents. Nevertheless, we show that many of the symmetric functions that are relevant for estimating statistics of the network are computable with small amounts of memory.

Given $c$ bits of memory per
agent it is possible to compute: (i) The $c$ least significant bits in the  binary representation of $r$, where $r$ is the number of nodes with color $0$. (ii) The threshold function
$f: [n] \to \{0,1\}$ which is $1$ if $\frac{r}{n-r} \geq \frac{a}{b}$ and $0$ otherwise for integers $a,b \leq 2^{c}$.
The higher and lower order bits of $r$ can be computed with a constant amount of memory, the computation of middle bits 
might require more memory. Interestingly, we show that every bit in the binary representation of $r$ can be computed with $O(\log \log n)$ memory.
\begin{theorem}
Every bit in the binary representation of $r$ can be computed by a network with $\log \log n+2$ bits of memory per agent.
\end{theorem}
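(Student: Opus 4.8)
The plan is to realize $r$ as the output of a \emph{distributed binary counter} built from a token-merging (``carry'') process, and to read off the single prescribed bit $i$ from the stable configuration this process reaches. Give every agent a single slot that is either \emph{empty} or holds a \emph{token} tagged with a \emph{level} $j$. Initially each agent of color $0$ holds a level-$0$ token (value $2^0$) and every other agent is empty. When two agents interact I let them update as follows: two tokens of the same level $j<i$ merge into one token of level $j+1$ (value $2^{j+1}$, so the total token value is conserved); a token may hop onto an empty neighbour (no value change); and tokens of different levels may swap positions. The quantity $\sum 2^{\mathrm{level}}$ summed over all tokens is invariant and equals $r$. A configuration is stable exactly when no two tokens share a level, and such a configuration is forced to be the \emph{binary representation} of $r$: the surviving levels are distinct powers of two summing to $r$, which is unique. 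Hence bit $j$ of $r$ is $1$ iff a level-$j$ token survives, and since $r\le n$ a level never exceeds $\log_2 n$, so it fits in $\log_2\log_2 n+O(1)$ bits.

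\textbf{Reading bit $i$ with bounded memory.} To target a fixed $i$ I cap the ladder at level $i$: instead of merging two level-$i$ tokens I \emph{annihilate} both. This removes value $2^{i+1}$ and hence preserves $r \bmod 2^{i+1}$, so the stable configuration now encodes $r \bmod 2^{i+1}$ and a level-$i$ token survives iff bit $i$ of $r$ is $1$. An agent's slot then ranges over $\{\mathrm{empty}\}\cup\{0,1,\dots,i\}$, i.e.\ at most $\log_2 n+2$ states, which together with a single broadcast flag (below) fits in $\log\log n+2$ bits.

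\textbf{Convergence.} Every merge and every annihilation strictly decreases the number of tokens, which starts at $r\le n$; thus at most $n$ value-changing events ever occur, and after a finite (random) time $T$ the configuration is frozen. Between events only hops and swaps happen, so on a fixed connected graph with Poisson interactions the tokens perform coupled random walks; by recurrence on a finite connected graph any two same-level tokens collide almost surely, so the frozen configuration is stable, i.e.\ it is the binary representation of $r\bmod 2^{i+1}$ and correctly exposes bit $i$.

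\textbf{Propagating the answer, and the main obstacle.} What remains is that \emph{every} agent must output bit $i$, whereas after $T$ the information ``a level-$i$ token exists'' sits at (at most) a single, still-mobile agent. I would broadcast this one indicator bit with a constant-memory flooding rule reusing the spare flag, defaulting to $0$ and adopting $1$ from any neighbour currently certifying a level-$i$ token. The hard part is that this is an existential (OR) signal over a source that is a vanishing minority and that may appear transiently before $T$ and then annihilate, so a naive monotone flood would lock in spurious $1$'s. The resolution I would pursue leans on the finiteness of token events: there is a last event, after which the source (present or absent) is static, so it suffices to make the flood \emph{self-correcting}, e.g.\ by continually re-deriving each agent's flag from freshly received certificates so that stale $1$'s decay once no true source refreshes them, invoking the constant-memory consensus/broadcast primitives of the model. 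Making this erasure of pre-$T$ contamination work with only $O(1)$ extra memory and no termination detection is the step I expect to be most delicate.
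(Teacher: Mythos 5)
Your counting mechanism is essentially the paper's: your token merge (two level-$j$ tokens become one level-$(j{+}1)$ token) is exactly the paper's carry rule for two active color-$1$ nodes at level $l$, your swap/hop rules play the role of the paper's identity swaps that make active nodes random-walk, and your invariant (total token value $\equiv r \bmod 2^{i+1}$, stable configurations encode the binary representation) is the paper's invariant (3). The correctness and convergence of the counting phase are fine. The genuine gap is exactly the step you flagged: propagating the answer. In your scheme ``bit $=1$'' is certified by the \emph{presence} of a level-$i$ token and ``bit $=0$'' only by its \emph{absence}, and no $O(1)$-memory flood can handle an absence certificate: a monotone flood locks in spurious $1$'s copied from transient level-$i$ tokens that later merge or annihilate, while any reset rule (``decay unless refreshed'') either needs timestamps/termination detection, which you do not have, or causes agents far from the unique surviving token to reset and oscillate forever when the bit really is $1$, so they never converge. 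The ``self-correcting flood'' you gesture at is not a known constant-memory primitive in this model, and I do not see how to implement it.

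The paper closes precisely this hole by a small but crucial change to the merge rule: when two same-level carriers meet, it does \emph{not} leave an empty agent behind. Two color-$1$ active nodes at level $l$ become one color-$1$ active node at level $l{+}1$ \emph{and} one color-$0$ active node still at level $l$; meetings involving a color-$0$ active node merge colors and demote one node to passive. Consequently, for every level that is ever populated there remains forever exactly one active node at that level, and its color is the bit value --- a \emph{persistent positive certificate for both} outcomes, $0$ and $1$. Passive nodes then simply re-copy the color of any active level-$j$ node they meet, forever, with default output $0$ (covering only the case that level $j$ is never reached). Since there are finitely many merge events, after the last one the unique active level-$j$ node holds the correct bit permanently, every passive node meets it again almost surely, and stale values copied during the transient phase are overwritten. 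If you replace your annihilation/emptying rule by this ``leave a zero-marker'' rule, your argument goes through and becomes the paper's proof; as written, the readout step fails.
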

\noindent The theorem can be applied to the $k$-ary setting for constant $k$ to approximately count the number of nodes of the different colors
and estimate the parameters like variance and entropy for the probability distribution over colors  with $O(\log \log n)$ memory. 

The majority protocol of~\cite{MosselSchoenebeck:10} can be used to implement $Max/Min$ gates for comparing the number of nodes of two different colors. 
Implementing each gate requires a constant amount of memory, however composing the gates to evaluate circuits is non trivial due to the 
asynchronous nature of the model. We prove a composability lemma showing that the outputs of comparison gates can be composed 
in the network setting with constant amount of memory per gate.
\begin{theorem} 
A circuit of depth $d$ consisting of comparison gates can be evaluated by a network with $O(d)$ memory per agent. 
\end{theorem}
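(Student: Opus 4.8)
The plan is to run the protocols for all gates in parallel rather than trying to sequence them, and to tame the resulting interference with a level-by-level convergence argument. Concretely, I would partition each agent's memory into $d$ blocks of constant size, one per circuit level, so that the total is $O(d)$ bits per agent; the block for level $\ell$ stores the agent's current belief about the level-$\ell$ gate outputs. Each gate at level $\ell$ runs its own copy of the single comparison-gate protocol of \cite{MosselSchoenebeck:10}, except that it reads its inputs not from the fixed input colors but from the agents' current level-$(\ell-1)$ beliefs, and it continuously overwrites its verdict into the level-$\ell$ block. The circuit is thus never explicitly synchronized: every gate is always running, acting on whatever its predecessors currently report.

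The key property I would isolate and prove is a robustness, or self-stabilization, statement for a single gate: if the per-agent inputs to a comparison gate change only finitely many times, then the gate's output at every agent converges almost surely to the correct comparison of the \emph{final} input configuration, regardless of the transient values processed beforehand. The single-gate protocol is only guaranteed correct when its inputs are fixed from the outset, whereas composition demands that it recover after its inputs settle. I expect this to be the main obstacle, since the natural cancellation-based majority dynamics is not self-stabilizing: once opposing tokens annihilate, a later change of input cannot be undone. The heart of the argument is therefore to arrange each gate so that it re-derives its verdict from the persistent predecessor beliefs rather than from consumable tokens, or equivalently to give it a reset rule that, crucially, needs no global clock; because a gate only ever requires its inputs to be eventually fixed, a purely local, continuously recomputing rule suffices.

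Granting this robustness property, correctness follows by induction on depth. At level $1$ the inputs are the original colors, fixed for all time, so the single-gate guarantee makes the level-$1$ beliefs converge to the correct outputs after an almost-surely finite time $T_1$. For the inductive step, suppose the level-$(\ell-1)$ beliefs have stabilized to their correct values by time $T_{\ell-1}$; from that moment the inputs presented to every level-$\ell$ gate are fixed and correct, so the robustness property yields convergence of the level-$\ell$ beliefs after a further finite time, giving $T_\ell$. After $d$ applications, the output-level beliefs stabilize to the value of the circuit, which is exactly the requirement that the network compute $f$.

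Two points make the induction go through cleanly. First, the model asks only for eventual convergence, so no gate ever has to detect that its inputs have stabilized, an impossible task in an anonymous network lacking global information, and no global synchronization is required. Second, the memory accounting is immediate: each level contributes a constant amount of state, so the whole circuit runs in $O(d)$ memory per agent, matching the statement.
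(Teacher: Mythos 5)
There is a genuine gap at the center of your argument. Your whole induction rests on the ``robustness/self-stabilization'' property of a single comparison gate: that if its inputs change finitely many times, it converges to the correct comparison of the final inputs. You correctly observe that the cancellation-based majority dynamics of \cite{MosselSchoenebeck:10} does \emph{not} have this property (annihilated tokens cannot be resurrected), but you then simply assert that one can ``arrange each gate so that it re-derives its verdict from the persistent predecessor beliefs rather than from consumable tokens'' with a ``purely local, continuously recomputing rule.'' No such rule is exhibited, and exhibiting one is precisely the hard part of the theorem. Comparing the cardinalities of two sets of anonymous, constant-memory agents fundamentally relies on a conserved-charge/cancellation mechanism; a gate that ``recomputes'' the comparison from scratch after its inputs settle would need some way to re-initialize its token population without a global reset signal, which is exactly the kind of global coordination the model forbids. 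So the key lemma of your proof is not proved, and it is not clear it is even true in the form you state it (for arbitrary finite input changes).

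The paper's proof sidesteps self-stabilization entirely by exploiting the \emph{structure} of how a gate's inputs change: an input to $G=\max(A,B)$ only changes when a lower-level collision consumes a node, i.e., changes are monotone decrements, and each one can be signaled. The composability lemma (Lemma \ref{comp}) adds a single \emph{mark bit} per node per level: when a node loses a lower-level collision, its mark bit is set, recording a pending correction of one unit of charge that must be removed at the level of $G$; local rules then discharge the mark bit by decrementing the node's charge at $G$ (or waiting until this is possible). Correctness is then a conservation argument, not a convergence-after-perturbation argument: writing $A=c_1+c_2+a$ and $B=d_1+d_2+b$ for the two kinds of corrections, the number of collisions inside $G$ is $c_2+d_2+\min(a,b)$, so the number of output $1$'s is $a+b-\min(a,b)=\max(a,b)$. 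In short, where you ask each gate to \emph{recover} from stale inputs, the paper arranges for every input change to be \emph{accounted for} within the token dynamics, so the one-shot cancellation protocol remains valid. Your memory accounting ($O(1)$ bits per level, $O(d)$ total) and the induction-on-depth skeleton match the paper, but without a construction of the self-stabilizing gate your proof does not go through.
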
 
\noindent Consider the problem of computing the plurality color for a network with $k$ different colors, it is easy to compute plurality with
 $O(k^{2})$ or $O(k)$ memory per node using the majority protocol of~\cite{MosselSchoenebeck:10}, while the lower bound on memory is $O(\log k)$ 
 as $\log k$ bits are needed to store a color from $k$. Simulating the comparison tree of depth $\log k$ for computing the 
 maximum of $k$ numbers over a network using theorem 2 we obtain: 
 \begin{corollary}
The plurality function over $k$ colors can be computed by a network with $O(\log k)$ bits of memory per agent.
\end{corollary}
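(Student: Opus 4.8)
The plan is to reduce the computation of plurality to the evaluation of a balanced tournament circuit of comparison gates and then invoke Theorem~2. First I would observe that the plurality color is exactly the $\arg\max$ over the $k$ colors of the counts $r_1,\dots,r_k$, where $r_i$ is the number of agents holding color $i$. Computing an $\arg\max$ of $k$ quantities can be realized by a single-elimination tournament: arrange the $k$ colors as the leaves of a balanced binary tree, and at each internal node place a $Max$ gate that compares the counts of its two input colors and forwards the color with the larger count (breaking ties by, say, the smaller color index). The color emitted at the root is the plurality color, and the depth of this tree is $d=\lceil \log_2 k\rceil$.

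Next I would instantiate each $Max$ gate using the comparison primitive built from the majority protocol of~\cite{MosselSchoenebeck:10}. A gate receiving two candidate colors $a$ and $b$ runs the comparison protocol to decide whether $r_a \ge r_b$ and outputs the winning color; as noted in the excerpt, implementing each such gate requires only a constant amount of memory. The one point that needs care here is that a gate's output is a \emph{color} rather than a single bit, so each agent must also be able to store the identity of a color, which costs $\log k$ bits; this is already within the target budget.

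With the tournament expressed as a depth-$d$ circuit of comparison gates, I would apply Theorem~2 directly: a circuit of depth $d$ of comparison gates is evaluable by a network using $O(d)$ memory per agent. Since $d=\lceil \log_2 k\rceil$, this gives $O(\log k)$ memory, and adding the $O(\log k)$ bits needed to name a color keeps the total at $O(\log k)$ bits per agent, matching the information-theoretic lower bound of $\log k$ observed in the preceding discussion.

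I expect the genuine difficulty to be entirely absorbed by Theorem~2. The nontrivial issue is \emph{composition}: gates above the first level must compare the counts of colors that are themselves the asynchronously produced outputs of lower gates, so one cannot simply run $\log k$ independent comparisons in isolation. It is precisely the composability lemma underlying Theorem~2 that guarantees the outputs of comparison gates propagate correctly up the tree despite the asynchronous, Poisson-driven dynamics, charging only $O(1)$ memory per level. The remaining steps---the $\arg\max$/tournament reduction and the bound $d=\lceil\log_2 k\rceil$ on the tree depth---are routine, so the corollary follows by specializing Theorem~2 to this depth.
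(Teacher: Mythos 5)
Your proposal is correct and follows essentially the same route as the paper: the paper likewise arranges the $k$ colors as leaves of a complete binary comparison tree of depth $\log k$, implements the interior $\max$ gates via the majority-protocol comparison primitive, and invokes the composability lemma (Theorem~2) to handle the asynchronous gate composition, with each agent storing only the $O(1)$ bits per gate along the path from its own color's leaf to the root plus $O(\log k)$ bits for its initial and final color (giving the paper's explicit $6\log k$ bound). The only detail the paper spells out that you gloss over is the final convergence step, in which every agent copies the color of a node that has won all comparisons on its path, but this is a minor bookkeeping point rather than a gap.
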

\noindent All the protocols that we describe for anonymous agents 
are independent of the geometry of the graph and converge in time $O(n^{3})$. The protocols are robust against 
change in network structure and remain useful in the setting where agents are online for limited time periods. 


\section{Preliminaries} 
\subsection{Bounded memory dynamics:} The bounded memory dynamics model is formalized as a three tuple $(G, \Sigma, T)$, 
\begin{enumerate} 
\item $G(V,E)$ is the underlying graph, for an edge $(i,j) \in E(G)$, interaction between $(i,j)$ is modeled as a Poisson process of rate $\nu$. 
\item $\Sigma = \Sigma_{e} \times \Sigma_{i}$ is the memory state of the agents, $\Sigma_{e}$ is the external memory that is visible to other 
agents in the network while $\Sigma_{i}$ is the private memory.  
\item $T$ is a possibly randomized of transition rule, $T: \Sigma \times \Sigma_{e} \times r \to \Sigma$. The transition rule takes as input the current state, 
the state of the neighbor and random bits represented by $r$ and produces the next state. 
\end{enumerate} 
The difference between $\Sigma_{i}$ and $\Sigma_{e}$ can be used to impose restrictions on the amount of communication allowed during an interaction. 
 We bound the total amount of memory $m=|\Sigma_{i} + \Sigma_{e}|$ so the distinction between external and internal memory is not important for us. 
A single step of computation in the bounded memory dynamics model is the following: the edge $(i,j)$ is activated, and agents $i$ and $j$ interact and update 
their states according to the transition rule $T$. 

\subsection{Computation:} Computation is modeled as the task of evaluating a function $f: [k]^{n} \to [k]$ where the inputs to $f$ are distributed among the 
$n$ agents in the network. The network computes the function $f$ if for all possible inputs all the agents in the network eventually converge to the correct value 
of $f$. A protocol specified by the transition rule $T$ is said to compute $f$ if all networks using the rule $T$ in the bounded memory dynamics compute $f$. 
A protocol is efficient if the expected convergence time is $poly(n)$.  

The following lemma is used to bound the expected running time 
for all the protocols presented here, the proof can be found in section 6.4.3 of the book by Aldous-Fill~\cite{AldousFill:u} on Markov chains, 
\begin{lemma} \label{cat} 
The expected time in which two independent random walks on a graph $G$ meet is $O(n^{2})$. 
\end{lemma}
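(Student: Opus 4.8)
The plan is to reformulate the meeting time as a hitting time for the product chain and then estimate that hitting time using reversibility. Write $(X_t,Y_t)$ for the two independent walks; this pair is itself a reversible continuous-time Markov chain $Z_t$ on $V\times V$, namely the random walk on the product graph $G\times G$, whose stationary distribution is the product $\pi\otimes\pi$ of the single-walk stationary distribution $\pi$. The two walks meet precisely when $Z_t$ enters the diagonal $D=\{(v,v):v\in V\}$, so the meeting time equals the hitting time $T_D$, and the claim becomes $\max_z\mathbb{E}_z[T_D]=O(n^2)$.

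The first observation is that $D$ is a \emph{large} target. Since $\pi(v)=\deg(v)/(2|E|)$, the Cauchy--Schwarz inequality gives
$$(\pi\otimes\pi)(D)=\sum_{v}\pi(v)^2\ \ge\ \frac1n\Big(\sum_v\pi(v)\Big)^2=\frac1n.$$
Thus two stationary walks already coincide with probability at least $1/n$, which is the reason to expect a bound of order $n^2$ rather than the order $n^3$ that hitting a single fixed vertex can require (e.g.\ on the lollipop graph). The work is to convert the statement ``$D$ has mass at least $1/n$'' into a hitting-time bound.

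The main quantitative step I would use is the electrical-network description of hitting times for reversible chains. Shorting the diagonal to a single node $a^{*}$, one has $\mathbb{E}_z[T_D]\le c_{G\times G}\cdot R_{\mathrm{eff}}(z,D)$, where $c_{G\times G}=\Theta(n|E|)$ is the total conductance of the product network and $R_{\mathrm{eff}}(z,D)$ is the effective resistance from the start to the shorted diagonal. I would then bound $R_{\mathrm{eff}}(z,D)$ via Thomson's principle by exhibiting a low-energy unit flow from $z=(x_0,y_0)$ into $D$: route the current so that the two coordinates are driven toward a common value while \emph{spreading} the terminal current over many diagonal vertices, so that the geometry of $G$ enters only through resistances of order the diameter while the width of the flow removes the extra factor of $n$. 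The target estimate is $R_{\mathrm{eff}}(z,D)=O(n/|E|)$ uniformly in $z$, which combines with $c_{G\times G}=\Theta(n|E|)$ to give $\mathbb{E}_z[T_D]=O(n^2)$.

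The main obstacle is precisely this resistance (equivalently, flow) estimate, and it is where the spread-out nature of $D$, not merely its mass, must be exploited: the crude bounds --- the commute time to $D$, or $t_{\mathrm{rel}}/(\pi\otimes\pi)(D)$ --- are each dominated by the hard ``return'' direction of hitting a single vertex and only yield $O(n^3)$, and even a one-sided resistance bound can lose a logarithmic factor. That $O(n^2)$ is nonetheless correct is visible on a vertex-transitive graph, where the difference $X_t-Y_t$ is itself a random walk and meeting is the return of a single walk to the identity, occurring in expected time $O(n^2)$; the task is to recover this bound for arbitrary connected graphs through the flow construction above. This sharp control of the hitting time of the diagonal is exactly the content of the cited estimate of Aldous--Fill, and carrying out the flow construction uniformly over all connected graphs is the step I expect to require the most care.
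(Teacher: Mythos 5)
Your proposal is a program rather than a proof, and its decisive step is both missing and, as formulated, unachievable. (Note that the paper itself offers no argument for this lemma --- it is quoted from Section 6.4.3 of Aldous--Fill --- so your attempt has to stand entirely on its own.) The reduction is fine: viewing the two walks as one chain on the product graph and identifying the meeting time with the hitting time of the diagonal $D$ is standard, as is the observation that $D$ carries stationary mass at least $1/n$. But the quantitative engine of your plan --- exhibiting, for every start $z$, a unit flow into $D$ of energy $O(n/|E|)$ --- is announced as a target and explicitly deferred (``the step I expect to require the most care''), and in fact no such flow exists. Take $G=C_n$, so $|E|=n$ and your target reads $R_{\mathrm{eff}}(z,D)=O(1)$. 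The product graph is the two-dimensional discrete torus, $D$ is a wrapped diagonal cycle, and for $z$ at distance $d=\Theta(n)$ from $D$ any unit flow must cross the edge boundaries of the nested boxes $B_1\subset\cdots\subset B_{d-1}$ around $z$; since $|\partial B_r|=\Theta(r)$, Cauchy--Schwarz gives energy at least $\sum_{r<d}\Omega(1/r)=\Omega(\log n)$. By Thomson's principle the effective resistance \emph{equals} the minimum flow energy, so no ``spreading'' of the flow can evade this: $R_{\mathrm{eff}}(z,D)=\Theta(\log n)$, and your commute-time bound can never yield better than $O(n^2\log n)$ on the cycle, whereas the truth is $\Theta(n^2)$. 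The logarithmic loss you mention in passing is therefore not a removable artifact of crude bounds; it is an intrinsic obstruction to the entire electrical/commute-time route, and any correct proof (including the one the paper cites) must use different tools, e.g.\ stationarity and occupation-measure arguments for hitting the diagonal, rather than worst-case effective resistance.

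There is also a secondary but genuine mismatch in time normalization. In the paper's model the walks are driven by rate-$\nu$ Poisson clocks on \emph{edges}, so a token at $v$ jumps at rate proportional to $\deg(v)$ and its stationary law is uniform; the $O(n^2)$ of the lemma is in these continuous-time units. Your product chain is instead built on $\pi(v)=\deg(v)/2|E|$, i.e.\ the discrete-time (unit jump rate per vertex) normalization, and in those units the statement you set out to prove is simply false for general graphs: on a barbell (two cliques of size $n/3$ joined by a path of length $n/3$), two independent walks started in opposite cliques need $\Theta(n^3)$ expected steps to meet, since crossing the bridge takes $\Theta(n^3)$ steps and the chance of a meeting on the path is $O(n^{-3})$ per step. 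So even before filling the main gap, the argument must be anchored to the paper's edge-rate clock, under which degree-biased jump rates exactly compress the bottlenecks that break the discrete-time version.
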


\section{Binary symmetric functions} 

Agents are assigned values from $\{0,1\}$ in the binary setting. A node with value $0$ is called a red node and $r$ denotes the total number of red nodes in 
the network. A function $f:\{0, 1\}^{n} \to \{0,1\} $ is symmetric if it depends only on the number of red nodes in the network. The total number of 
symmetric functions $f:[n] \to \{0,1\}$ is $2^{n}$. 
\subsection{Constant memory protocols:} 
The only deterministic transition rule with $1$ bit of memory computes the $OR$ function; that is, the output is $1$ if the network has at least one red/blue node. 
Randomized transition rules with $1$ bit of memory per node can be used to converge to consensus and find a bipartition if the 
network is bipartite. 

With two bits of memory it is possible to compute the parity and majority functions. Protocols for reaching consensus and finding the majority with 
$2$ bits of memory per node were presented in~\cite{MosselSchoenebeck:10}. We next present the parity protocol and the threshold protocol, which require a constant amount of memory. 
The protocols are simple but useful as building blocks of more sophisticated protocols. 

\begin{prop} 
The $c$ least significant bits in the binary representation of $r$ can be computed with $c+1$ bits of memory. 
\end{prop}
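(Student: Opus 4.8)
The plan is to compute $r\bmod 2^{c}$, whose binary expansion is precisely the $c$ least significant bits of $r$, by running a mass-conserving coalescing random walk that simultaneously broadcasts its value. Each agent stores a pair $(v,b)$ with $v\in\{0,\dots,2^{c}-1\}$ (the $c$ bits that will hold the output) and a single flag $b\in\{0,1\}$ marking it \emph{active} (it currently carries a token) or \emph{passive}; this is $c+1$ bits in total. Initially every agent is active with $v=1$ if it is red and $v=0$ if it is blue, so that $\sum_i v_i=r$.

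The transition rule, fired on an edge $(i,j)$, does three things. If $i$ is active and $j$ is passive, the token hops from $i$ to $j$: set $v_j\leftarrow v_i$ and $b_j\leftarrow 1$, set $b_i\leftarrow 0$, and leave $v_i$ unchanged, so node $i$ retains a copy of the token's value. If both are active they coalesce: $i$ keeps the token with $v_i\leftarrow (v_i+v_j)\bmod 2^{c}$ while $j$ becomes passive (the symmetry between the two identical active agents is broken using the random bits available to $T$). If both are passive, nothing happens. A direct check shows that the sum of $v$ over the active agents is invariant modulo $2^{c}$: a hop relocates the token without changing this sum, and a coalescence merges two summands into one. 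Hence this conserved quantity always equals $r\bmod 2^{c}$.

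For convergence I would track the number of active agents, which never increases and drops by one at each coalescence. Each token performs a random walk, so by Lemma~\ref{cat} any two tokens meet in expected time $O(n^{2})$; summing over the at most $n$ coalescences shows that after expected time $O(n^{3})$ a single token survives. By the invariant this unique token carries the value $r\bmod 2^{c}$, and since no coalescence can occur again its value is frozen at the correct answer. As this lone token continues its random walk it deposits this value (via the hop rule $v_j\leftarrow v_i$) at every node it visits, and a random walk on a finite connected graph visits every node infinitely often, so every agent is eventually set to $r\bmod 2^{c}$.

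The step I expect to be the main obstacle is correctness under asynchrony: while several tokens are still present they deposit partial, \emph{incorrect} sums across the network, and the anonymous, memory-bounded agents have no way to detect that coalescence has finished. The resolution is that these deposits are self-correcting. Any value written before coalescence completes is merely transient, since after the unique token is formed it revisits and overwrites every node with the correct value, and once a node has been overwritten it can only be rewritten by the same (correct) token or left untouched by passive--passive meetings. Thus all agents converge to $r\bmod 2^{c}$, with total convergence time $O(n^{3})$.
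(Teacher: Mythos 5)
Your proposal is correct and follows essentially the same approach as the paper: active/passive agents whose active counters coalesce by summing modulo $2^{c}$, token movement by swapping status, and broadcast of the last surviving token's value, with convergence in expected time $O(n^{3})$ via Lemma~\ref{cat}. The only cosmetic difference is that you start all agents active (blue agents with value $0$) whereas the paper starts blue agents passive; this does not change the argument.
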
 
\begin{proof} 
The memory for each node is partitioned as $\Sigma=(C,b)$ where counter $C$ stores the $c$ least significant bits in the binary representation of $r$ and bit $b$ indicates active/passive status. 
The initial states are $(1,1)$ for red nodes and $(0,0)$ otherwise. The interaction rules are the following: (i) When two active nodes meet one passes information to the other and becomes passive $(c_{1}, 1), (c_{2}, 1) \to (c_{1}+c_{2}, 1) (c_{1}+c_{2}, 0)$.  (ii) A passive node copies the counter of an active node and active/passive status is swapped $(c_{1}, 0) (c_{2}, 1) \to (c_{2}, 1), (c_{2}, 0)$. 

The number of active nodes in the network decreases by $1$ when two active nodes meet and any two active nodes are guaranteed to meet in time $O(n^{2})$ by lemma \ref{cat} . 
The last active node $x$ in the network has counter $c= r \mod 2^{c}$ and all nodes converge to the correct answer after meeting $x$. The expected time for convergence is $O(n^{3})$. 
\end{proof} 
\noindent With $\log n+1$ bits of memory per node, the entire binary representation of $r$ can be computed and stored by all the nodes using the parity protocol. 
A counting argument shows that at least $O( \log n)$ bits of memory per node are necessary to compute every symmetric function. 
 
\begin{prop}
  There exist functions $f:[n] \rightarrow \{0,1\}$ that cannot be computed by a network of identical nodes where each node has $k \le \frac{\log n}{3}$ bits of memory. (Assuming the amount of randomness is some small constant.)
\end{prop}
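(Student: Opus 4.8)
The claim is that functions $f:[n]\to\{0,1\}$ require $\Omega(\log n)$ bits of memory, which I would establish by comparing the number of distinct functions computable with $k$ bits of memory against the total number $2^n$ of symmetric functions. The key observation is that a protocol with $k$ bits of memory is entirely specified by its transition rule $T$ acting on a bounded state space: each node's state lives in $\Sigma$ with $|\Sigma|=2^k$, and the behavior of the whole network is determined by how the protocol treats \emph{counts} of nodes in each state rather than the identities of the nodes (since the agents are anonymous and identical). So I would first argue that the ``global behavior'' of an $n$-node network running a fixed protocol is governed by the vector of occupation counts $(n_\sigma)_{\sigma\in\Sigma}$, i.e.\ how many nodes are in each of the $2^k$ states.

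\textbf{Counting computable functions.} The next step is to bound the number of \emph{distinct} functions a $k$-bit protocol can compute. A protocol is a pair: an initial assignment of states to the two input colors (a map $\{0,1\}\to\Sigma$, of which there are at most $(2^k)^2$ choices once we allow randomized but constant-randomness rules), together with a transition rule $T:\Sigma\times\Sigma_e\times\{0,1\}^{O(1)}\to\Sigma$. Since the domain and codomain of $T$ have size bounded by a function of $2^k$, the number of possible transition rules is at most $(2^k)^{2^k\cdot 2^{k}\cdot 2^{O(1)}} = 2^{2^{O(k)}}$. Each such protocol computes at most one symmetric function $f:[n]\to\{0,1\}$. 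Therefore the number of symmetric functions computable with $k$ bits of memory is at most $2^{2^{O(k)}}$, and I would plug in $k\le \tfrac{\log n}{3}$ to show this count is $2^{2^{O(\log n/3)}}=2^{n^{o(1)}}$ (with constants chosen so the exponent is $o(n)$), which is strictly smaller than the total $2^n$. Hence some symmetric function is not computable, proving the proposition.

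\textbf{The main obstacle.} The delicate point is making precise the claim that \emph{each protocol computes at most one function}, and more subtly, that the protocol's output on an $n$-node network depends only on the count vector and not on the graph structure or interaction order. The definition of computation here is worst-case over all connected graphs and all Poisson interaction schedules, so I must argue that a protocol either computes a fixed symmetric function on $[n]$ \emph{or fails to compute any function at all} (by not converging, or converging to different answers on different schedules). Either way it contributes at most one function to the count, so the crude upper bound on the number of protocols still dominates the count of computable functions. I would need to be careful that the ``small constant amount of randomness'' hypothesis keeps the number of transition rules finite and controlled by $2^{O(k)}$; if the randomness were allowed to grow with $n$ this counting would break. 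The arithmetic of fitting $2^{2^{O(k)}} < 2^n$ for $k \le \tfrac{\log n}{3}$ is routine once the structural claim is in place, so the real work is isolating the correct notion of ``a protocol'' and verifying the per-protocol output is well-defined as a single function.
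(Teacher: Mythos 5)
Your proposal is correct and takes essentially the same route as the paper: a counting argument that bounds the number of distinct $k$-bit protocols (doubly exponential in $k$, and finite only because the randomness is a constant number of bits) against the $2^n$ symmetric functions, relying on the observation that each protocol computes at most one function --- which, as you note, follows directly from the worst-case definition of computing. One notational slip worth fixing: with $k \le \frac{\log n}{3}$ the exponent in your protocol count is of order $k \cdot 2^{2k+O(1)} = \Theta(n^{2/3}\log n)$, so the count is $2^{o(n)}$ rather than $2^{n^{o(1)}}$ as written, but this still falls short of $2^n$, matching the paper's condition $2k+c+\log(2k) < \log n$.
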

\begin{proof}
  The total number of symmetric functions $f:[n] \rightarrow \{0,1\}$ is $2^n.$  The number of different protocols with $k$ bits of memory is at most $(2^{2k})^{2^{2k+c}},$ where $c$ is the number of random bits they are given access to, since for each of the $2^{2k+c}$ possible combinations of their combined memory and the random bits a pair of nodes can rewrite their memories in any one of $2^{2k}$ ways.
  
  Given a network with $n$ identical nodes, since every protocol can compute at most one function, if $2k+c+\log(2k) < \log n,$ then raising everything to the power of 2, twice, yields $(2^{2k})^{2^{2k+c}} < 2^n,$ and thus there must be some symmetric function that cannot be computed with $k$ bits of memory per node.
\end{proof}

The problem of finding an explicit symmetric function that requires $\Omega(\log n)$ memory for computation remains open. 
The second basic protocol for computation on networks computes rational thresholds generalizing the majority protocol 
from~\cite{MosselSchoenebeck:10}. 

\begin{prop} 
The threshold function $f: [n] \to \{0,1\}$ which is $1$ if $\frac{r}{n-r} > \frac{a}{b}$ and $0$ otherwise for integers $a,b \leq 2^{c}$ can be computed with 
$c+2$ bits of memory. 
\end{prop}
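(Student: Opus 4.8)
The plan is to reduce the rational threshold to a single sign computation and then run a weighted cancellation dynamics generalizing the majority protocol. Clearing denominators, the condition $\frac{r}{n-r} > \frac{a}{b}$ is equivalent to $br > a(n-r)$, i.e.\ to $S > 0$ where $S := br - a(n-r)$; this formulation also handles $n-r = 0$ correctly, since then $S = br > 0$. I would therefore give every red node an initial integer charge $+b$ and every blue node charge $-a$, so that the total charge in the network equals $S$ and the desired output bit is exactly the sign of $S$, with the tie $S = 0$ mapped to $0$. Each node stores a signed charge of magnitude at most $2^c$ together with an opinion sign that is retained even when the magnitude is $0$; since $a,b \le 2^c$ the initial magnitudes obey this bound, and the $2(2^c+1) = 2^{c+1}+2 \le 2^{c+2}$ resulting states fit in $c+2$ bits, leaving spare states when $c \ge 1$.

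The dynamics is partial annihilation of opposite charges. When two nodes with charges $v_1,v_2$ of opposite sign meet, I set $s = v_1+v_2$, place $s$ on one node and $0$ on the other; the sign of the charge is the node's current opinion, which zero-charge nodes copy from any nonzero neighbour they meet. The essential point for the memory bound is that charges of the \emph{same} sign are never added, so no magnitude ever exceeds the initial bound $2^c$ (for opposite signs $|s| \le \max(|v_1|,|v_2|) \le 2^c$), and the state always fits in $c+2$ bits. The total charge $S$ is invariant, the number of nonzero nodes is non-increasing, and each genuine cancellation decreases $\min(P,N)$ by at least one, where $P$ and $N$ denote the total positive and negative charge. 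By Lemma~\ref{cat} any chosen positive and negative carrier meet in expected time $O(n^2)$; since at most $n$ cancellations can occur, after expected time $O(n^3)$ all charge of the minority sign is exhausted, leaving only charges whose sign equals $\mathrm{sign}(S)$ (or none when $S=0$), which then roam the connected graph and convert every node's opinion.

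The step I expect to be the main obstacle is forcing \emph{all} nodes, including the zero-charge ones, to converge to a single correct opinion, and in particular making the tie case $S=0$ output $0$: when $S=0$ every charge eventually annihilates, so no surviving carrier remains to dictate the answer, and a naive ``copy from a neighbour'' rule leaves the zero nodes with a persistent mixture of stale $+$ and $-$ opinions. I would resolve this exactly as in the majority protocol of~\cite{MosselSchoenebeck:10}, which is precisely the case $a=b=1$: its tie-breaking mechanism generalizes, and for $c \ge 1$ the spare states within the $c+2$-bit budget can carry whatever additional flags it needs, so that the $0$ opinion is the default that wins whenever no positive charge survives, while any surviving positive charge overrides it to $1$. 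This yields output $1$ precisely when a positive charge survives, i.e.\ exactly when $S>0$, and output $0$ on both $S<0$ and the tie $S=0$, matching the strict inequality.
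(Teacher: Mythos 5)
Your reduction and cancellation dynamics are exactly those of the paper's proof: charges $+b$ for red and $-a$ for blue, the invariant total charge $S = br - a(n-r)$, and the key observation that only opposite-sign charges are ever added, so magnitudes never exceed $2^{c}$ and the state fits in $c+2$ bits. But there is one concrete gap: as written, your protocol gives the charges no way to move. Charge is transferred only when two \emph{oppositely} charged nodes interact, and zero-charge nodes merely \emph{copy opinions}; so on a path $u$--$w$--$v$ with the positive charge at $u$ and the negative charge at $v$, no cancellation ever occurs ($u$ and $v$ are not adjacent, and neither charge ever leaves its node), and $w$ flip-flops between stale opinions forever. For the same reason your appeal to Lemma~\ref{cat} is unjustified: that lemma is about two \emph{random walks} meeting, and your carriers are static, so ``roam the connected graph'' is asserted but never enabled by your rules. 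The paper supplies precisely this missing mechanism with its weak/strong bit: when a strong (charge-carrying) node meets a weak one, the counter is copied and the strong/weak statuses are \emph{swapped}, so the charges perform random walks on the graph; this is what makes opposite charges meet in expected $O(n^{2})$ time, and what lets the surviving strong tokens visit, and hence convert, every weak node after the last cancellation. Adding this swap rule (and noting that weak nodes need only remember a sign, so the state count still fits in $c+2$ bits) repairs your argument and essentially turns it into the paper's.

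One further remark on the tie case $S=0$: you are right to flag it, and you are in fact more scrupulous than the paper, which simply concedes that ``if equality holds the network converges to any one of the colors.'' However, your own resolution --- deferring to an unspecified tie-breaking mechanism of~\cite{MosselSchoenebeck:10} plus ``spare states'' --- is an appeal to authority rather than a proof, so the strict-inequality semantics of the statement remains unverified in your write-up, just as it does in the paper's.
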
 
\begin{proof} 
The memory for each node is partitioned into $(C,b)$ where $C$ is a counter with $c+1$ bits and bit $b$ indicates weak/strong status. The counter $c$ is initialized to $+b$ for red nodes 
and $-a$ otherwise, the sign of the counter being the color, all nodes start as strong. The interaction rules are the following: (i) If strong nodes of opposite color meet cancellation occurs, 
$(c_{1}, 1) (c_{2}, 1) \to (c_{1}+c_{2}, 1) (0, 0)$ if $|c_{1}|>|c_{2}|$, if $|c_{1}|=|c_{2}|$ the nodes change to $(0,0), (0,1)$. (ii) A weak node
copies the counter of a strong node and weak/strong status is swapped $(c_{1}, 0) (c_{2}, 1) \to (c_{1}, 1), (c_{1}, 0)$. 

Any two nodes in the network with opposite colors eventually meet, so finally the network has strong nodes of at most one color. The total value of the counters is 
$br - a(n-r)$ remains invariant throughout the process, the network converges to red if $\frac{r}{n-r} > \frac{a}{b}$. If equality holds the network converges to any one 
of the colors. The expected time for convergence is $O(n^{3})$ as $n$ collisions suffice for convergence. 
\end{proof} 
The parity and threshold protocols show that with $c+2$ bits of memory the network can compute symmetric functions 
that depend on the $c$ most or least significant bits in the binary representation of $r$. Extracting information about the
middle bits in the binary representation of $r$ using these protocols requires $O(\log n)$ memory.

\subsection{Approximate counting:} 
Any bit in the binary representation of $r$ can be computed with $O(\log \log n)$ bits of memory, the protocol can be used to estimate $r$ 
within a constant factor.  
\begin{thm1*} 
Every bit in the binary representation of $r$ can be computed by a network with $\log \log n+2$ bits of memory per agent. 
\end{thm1*} 
\begin{proof} 
We describe the protocol for computing the $j$-th bit of the binary representation of $r$. The memory state of every node is partitioned into 
a three tuple $(c,b, l)$ where the first bit $c$ stores the color, the second bit $b$ stores the active/passive status and the remaining  $\log \log n$ bits store the level $l$, where $l$ is the 
bit in the binary representation of $r$ that the node is currently keeping track of. The protocol starts with all red nodes active $(b=1)$ and at level $0$. 

Nodes with $b=1$ are active and participate in the computation while nodes with $b=0$ are passive and copy the color of an active node at level $j$ on meeting one. When a passive node 
meets an active node, the identities of the two nodes are swapped to ensure that the active nodes perform a random walk. Active nodes at the  
same level interact on meeting, and there is no interaction between active nodes at different levels. 
The interaction rules for the meeting of two active nodes $(c_{1}, 1, l)$ and $(c_{2}, 1, l)$ are given as follows: 
\begin{eqnarray*}
(c_{1}, 1, l), (c_{2}, 1, l) &\to& (0, 1, l), (1, 1, l+1)   \hspace{70pt} \text{if $c_{1}=c_{2}=1$} \\
&\to& (c_{1}+c_{2}, 1, l) ,(0, 0, l) \hspace{60pt} \text{otherwise}
\end{eqnarray*} 

The correctness of the protocol follows as the following invariants are maintained throughout the process: 
\begin{enumerate} 
\item The number of active nodes at level $l$ decreases by $1$ for every interaction between a pair of active level $l$ nodes. The connectivity of the graph ensures 
that any two active nodes at level $l$ eventually meet, hence if there is a non zero number of level $l$ nodes at any stage of the 
protocol eventually there will be exactly one node at level $l$. 
\item The level of an active node is a monotonically increasing function, let $l_{i}$ be the number of nodes that reach level $i$ at some stage. 
\item If $l_{i}=2k+r_{i}$, then exactly $k$ nodes reach level $i+1$ and the last active node at level $i$ has color $r_{i}$. 

Applying to level $0$, we find that the last active node at level $0$ stores $r \mod 2$ and the number of active nodes that reach level $1$ is $\lfloor r/2 \rfloor$. Applying the claim iteratively 
for higher levels it follows that the last active node at level $i$ stores the $i$-th bit in the binary representation of $r$. Passive nodes answer $0$ by default covering the case where no 
node reaches level $j$ and copy the value of an active node at level $j$ otherwise. Eventually every node in the network will output the correct value of $r_{j}$. 

\end{enumerate} 
The expected convergence time is $O(n^{3})$ by lemma \ref{cat} as the protocol converges after $O(r+n)$ collisions. 
\end{proof} 
The above protocol can be used to estimate the number of red nodes in the network within a factor of $2$, by  
having each node output the highest value of $j$ such that $r_{j}=1$. All nodes eventually converge to an estimate that is within a factor 
of $2$ of the number of red nodes. More generally with $\log \log n+ c+2$ bits of memory it is possible to estimate the number of red nodes 
within a factor of $1/2^{c}$. 

\section{Computing Comparison Statistics }
The binary setting generalizes to computing $k$-ary symmetric functions $f: [k]^{n} \to [k]$. 
Computing comparison statistics such as the most popular color (plurality) or the least common color are natural questions in the $k$-ary setting. 
We prove a composability lemma showing that depth $d$ circuits consisting of comparison gates can be simulated with $O(d)$ memory. 

Circuits are built out of distributed comparison gates, the distributed $\max$ gate is defined as follows, the $\min$ gate can be defined analogously, 
\begin{definition} 
$\max(a_{1}, a_{2})$: The gate acts on $a_{1}+a_{2}$ nodes where $a_{1}$ nodes have type $1$ and $a_{2}$ nodes have type $2$. Eventually $\max(a_{1}, a_{2})$ nodes 
output $1$ and the remaining $\min(a_{1}, a_{2})$ nodes output $0$. 
\end{definition} 
A protocol similar to the majority protocol from~\cite{MosselSchoenebeck:10} can be used to implement a max gate with three bits of memory per node: (i) Nodes of types $1,2$ have an initial 
configuration $(1,1)$ and $(-1,1)$ respectively. (ii) The update rules are: $(1, 1), (-1, 1) \to (0, 1) (0, 0)$ otherwise $(a, b) (x,y) \to (x,y) (a, b)$. The third bit is treated as the  
output bit, there are $\min(a_{1}, a_{2})$ collisions between nodes with charge $\pm 1$, ensuring that $\max(a_{1}, a_{2})$ nodes eventually output $1$. A min gate 
is implemented similarly with the output bits initialized to $0$ and with one bit updated to $1$ every time a collision occurs.

It is easy to construct a plurality protocol requiring $O(k)$ memory by having a node of color $c$ store results of pairwise comparisons with other colors, where comparisons are implemented as in the majority protocol. The minimum memory required for a $k$-ary protocol is $O(\log k)$ as $\log k$ bits are needed to store a color from $[k]$. We show that a circuit built out of $\min$ and $\max$ gates can be evaluated with memory proportional to the circuit depth and thus obtain a protocol for computing the plurality function  
with $6\log k$ bits of memory as a special case.

The composability lemma shows that computations involving $\max$ and $\min$ gates can be composed with one extra bit of memory for each level. We 
prove the lemma for $\max$ gates, the proof for $\min$ gates is analogous,  
\begin{lemma} \label{comp} 
The gate $G=\max(A,B)$ can be evaluated with one additional bit of memory per node if inputs to the gate are outputs of other $\max$ gates. 
\end{lemma}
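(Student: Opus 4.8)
The plan is to exhibit a transition rule for $G$ that takes the output bits produced by $A$ and $B$ as its inputs and stores $G$'s own answer in a single new bit per node. Write $a$ and $b$ for the number of nodes that $A$ and $B$ eventually drive to output $1$; since $A$ and $B$ are max gates these are the maxima of their respective inputs, and the goal is that eventually exactly $\max(a,b)$ nodes output $1$ at level $G$. Each node keeps its level-$A$/$B$ state (the three bits of the single-gate protocol) together with one extra output bit for $G$. A node whose lower gate currently reports $1$ participates in $G$ with induced charge $+1$ if it lies under $A$ and $-1$ if it lies under $B$; on these induced charges $G$ runs exactly the single-gate max dynamics --- cancellation of opposite charges and random-walk swaps --- recording the outcome in the new bit.

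Two facts would drive the correctness argument. First, a convergence-ordering fact: by the single-gate analysis together with Lemma~\ref{cat}, $A$ and $B$ each reach their final configuration in finite expected time, after which their output-$1$ counts are frozen at $a$ and $b$. Second, a monotonicity fact: the number of output-$1$ nodes of a max gate only \emph{decreases} in time, since each internal cancellation converts two output-$1$ nodes into one output-$1$ and one output-$0$ node. Hence the charges that $G$ sees can only disappear, never reappear. I would then maintain the invariant that the net charge present at level $G$ equals the current number of $A$-output-$1$ nodes minus the current number of $B$-output-$1$ nodes; once $A$ and $B$ have converged this net charge is pinned at $a-b$, and running the max dynamics from that moment drives $\max(a,b)$ nodes to output $1$ exactly as in the single-gate proof.

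The main obstacle lies in the phrase ``from that moment'': $G$ may begin cancelling charges induced by lower outputs that are not yet final, and a cancellation at level $G$ is irreversible, so a node that has already cancelled in $G$ and then loses its lower output would corrupt the bookkeeping. The difficulty is sharpened by the fact that the eventual survivors of $A$ are locally indistinguishable from the charged nodes that will still cancel, so one cannot simply defer $G$ until each node is ``confirmed.'' The crux is to spend the one extra bit so that every premature cancellation is harmless. My approach would exploit the monotone structure above: the count of $A$ drops only at an $A$-internal cancellation, an event at which the two responsible nodes are physically together, so the retraction of a $+1$ charge from $G$ can be paired \emph{locally} with the event that forces it --- the node that falls to $A$-output $0$ is exactly the node that must withdraw from $G$. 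I would use the extra bit to flag whether a node's $G$-charge is still live, have a node surrender its live charge when its lower output falls, and then prove that the net charge is preserved across every such step, so that the only cancellations committed before $A$ and $B$ converge are ones whose effect persists in the limit.

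Assembling these, after the $O(n^3)$ convergence of $A$ and $B$ (Lemma~\ref{cat}) the induced charges at level $G$ are fixed, the invariant pins the net charge at $a-b$, and the single-gate argument yields convergence of $G$ to $\max(a,b)$ output-$1$ nodes at the cost of exactly one additional bit; applying the argument inductively up a depth-$d$ circuit then gives $O(d)$ memory per node. I expect the genuinely delicate point to be verifying that a single bit keeps every premature cancellation reversible \emph{in effect}, and the monotonicity of the lower-gate output counts is the lever I would use to rule out the bad interleavings.
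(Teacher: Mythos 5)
You correctly set up the same framework as the paper (charges at $G$ induced by lower-gate outputs, one extra bit, single-gate cancellation dynamics run on top), and you correctly isolate the dangerous case: a node whose $G$-charge has already been consumed in a $G$-level cancellation and whose lower output subsequently falls. But your proposed mechanism --- ``have a node surrender its live charge when its lower output falls'' --- is vacuous in exactly that case: there is no live charge to surrender, and no removal of the node's own charge can repair the count. Your hoped-for conclusion that ``the only cancellations committed before $A$ and $B$ converge are ones whose effect persists in the limit'' is simply false. Concretely, take $A=\max(2,1)$ and $B=\max(1,1)$, so $a=2$, $b=1$, and $G$ should end with $\max(a,b)=2$ ones. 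Call the $A$-side nodes $u_1,u_2$ (first input) and $v$ (second input), and the $B$-side nodes $w_1,w_2$. Schedule the interactions so that first $u_1$ cancels $w_1$ at level $G$, then $v$ cancels $w_2$ at level $G$, and only afterwards does $v$ lose its comparison inside $A$. Now $v$ has no live $G$-charge to give up, the only remaining $G$-charge is $u_2$'s $+1$ with no opposite charge left anywhere, and the system is frozen with $u_1,u_2,v$ all outputting $1$ --- three ones instead of two. Monotonicity of the lower-gate counts does not exclude this; the schedule above is perfectly monotone.

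What the paper does here, and what your proposal is missing, is a \emph{compensation} rule rather than a surrender rule. The extra bit is a mark recording a pending decrement, and it is resolved in one of three ways depending on the node's current $G$-charge $C(x)$: if $C(x)=+1$ (for an $A$-side node) the charge and the output bit are zeroed out (your case); if $C(x)=0$ the node flips to the \emph{opposite} charge $-1$, injecting a new charge that will later annihilate one of the now-illegitimate survivors; and if $C(x)=-1$ the node \emph{waits}, keeping the mark set (and blocking further lower-level collisions at that node) until one of the first two cases applies. In the example, $v$ takes charge $-1$ and later cancels $u_2$, restoring the correct count of $2$. Note also that correctness is then not an invariant about net charge --- net charge alone does not determine the number of output ones --- but a global bookkeeping identity: writing $c_1,c_2$ (resp.\ $d_1,d_2$) for the number of zero-out and flip resolutions on the two sides, one has $A=c_1+c_2+a$ and $B=d_1+d_2+b$, the number of $G$-cancellations is $c_2+d_2+\min(a,b)$, and the number of ones at the output is $(c_2+d_2+a+b)-(c_2+d_2+\min(a,b))=\max(a,b)$, which is the paper's equation \eqref{col}. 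Without the flip-and-wait cases, one bit cannot encode a decrement owed against an already-spent charge, and that is precisely where your argument breaks.
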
  
\begin{proof} 
The initial configuration has $A$ nodes with charge $+1$ and $B$ nodes with charge $-1$, the action of the $\max$ gates in lower levels of the circuits effectively leads to a  
decrease in $A$ and $B$, the final result of the computation should reflect these changes. 

One extra bit $m$ called the mark bit is stored for all the $A+B$ nodes involved in the computation, the mark bits are initialized to $0$. A collision between nodes $(x,y)$ at a lower level 
occurs only if $m_x=0$, and $m_x$ is set to $1$ if the collision occurs. 

Wlog assume that node $x$ had initial charge $+1$, a set mark bit $m_{x}=1$ indicates that the total charge should be decreased by $1$ unit. The bit $m_{x}$ is reset to $0$ 
in one of the following ways depending on $C(x)$, the charge on $x$ at the input to $G$: (i) If $C(x)=1$ then it is set to $0$ both at the input and output of $G$. (ii) If $C(x)=0$,  $C(x)$ is set to $-1$ with no change at the output of $G$. (iii) If $C(x)=-1$ wait until the situation in (i) or (ii) occurs, the connectivity of the graph ensures that the mark bit eventually 
gets reset. 

Let $a$ and $b$ denote the output values of the max gates that are inputs to the gate $G$, and $(c_{1}, c_{2})$ and $(d_{1}, d_{2})$ be the number of updates of type $(i)$ and $(ii)$ above to nodes of type $A$ and $B$. When the inputs gates to $G$ have converged to their true values $(a,b)$ we have, 
\begin{eqnarray*}
A = c_{1}+ c_{2} + a \\ B= d_{1}+ d_{2} + b 
\end{eqnarray*}
The number of subsequent collisions between the $A+B$ nodes that are inputs to $G$ is equal to, 
\begin{eqnarray}\label{col} 
\min(A - c_{1} + d_{2}, B- d_{1} + c_{2}) &=& \min ( c_{2}+d_{2}+a, c_{2}+d_{2}+b)\notag  \\
&=& c_{2}+ d_{2} + \min(a,b) 
\end{eqnarray}
The number of $1$s at the output is obtained by subtracting the number of collisions at the lower level from $A+B- c_{1} - d_{1}= c_{2} + d_{2} + a+ b$. 
By equation \eqref{col} the number of $1$s at the final output is $a+b - \min(a,b) = \max(a,b)$, showing that the gate $G$ is evaluated correctly.

\end{proof} 

The composability lemma applied to a circuit of depth $d$ shows that depth $d$ comparison circuits can be implemented with $O(d)$ bits of memory, thus 
proving theorem $2$. The depth $\log k$ circuit for computing the maximum of $k$ numbers yields a protocol for computing plurality with 
$O(\log k)$ memory, 
\begin{theorem} 
The plurality function over $k$ colors can be computed by a network with $6\log k$ bits of memory per node. 
\end{theorem}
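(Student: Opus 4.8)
The plan is to realize the plurality computation as a balanced binary tournament of comparison gates and then invoke the composability machinery of Lemma~\ref{comp} to stitch the $\log k$ levels together. Index the colors by $\{0,\dots,k-1\}$ and build a complete binary tree of depth $\lceil \log k\rceil$ whose $k$ leaves carry the $k$ colors. At each internal node I place a $\max$ gate comparing the counts of the two colors feeding into it; by the definition of the $\max$ gate the number of nodes that survive with output $1$ is exactly $\max(a_1,a_2)$, i.e.\ the count of the more popular of the two colors. Consequently the winner of each comparison advances up the tree carrying its full count, and after $\lceil \log k\rceil$ levels the unique surviving color is precisely the one with the largest count, i.e.\ the plurality color (ties being broken arbitrarily, as the gate definition permits).

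To run this on the network, each node stores (i) a color label in $\log k$ bits recording the color it currently represents, and (ii) for each of the $\lceil\log k\rceil$ levels the constant-size state of the corresponding $\max$ gate together with the single mark bit required by Lemma~\ref{comp}. Crucially the bracket in which a node participates at level $\ell$ is a deterministic function of its stored color label — at level $\ell$ one compares the two colors that agree on all bits except bit $\ell$ — so routing costs no additional memory. When a comparison at level $\ell$ resolves, the surviving output-$1$ nodes adopt the label of the winning color, read off from the sign of the surviving charge; this is the step where the count and the identity of the winner are propagated together.

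The correctness of the composition is where the real work lies, and it is exactly what Lemma~\ref{comp} supplies. Because the inputs to each level-$\ell$ gate are the outputs of level-$(\ell-1)$ gates that may not yet have converged, I cannot evaluate the levels one at a time; instead all levels run concurrently and the mark-bit mechanism of Lemma~\ref{comp} guarantees that each gate eventually reports the true $\max$ of its inputs once those inputs stabilise. Inducting up the tree — the leaves carry fixed counts, and each level is correct given its inputs by Lemma~\ref{comp} — shows that the root gate converges to the plurality color, after which the winning label is copied to every remaining node by the usual broadcast step. The main obstacle is therefore not the tournament reduction itself but the asynchronous composition of its $\log k$ levels and the simultaneous propagation of each winner's identity, both absorbed into the one-bit-per-level overhead of Lemma~\ref{comp}.

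Finally I would verify the bookkeeping: the color label costs $\log k$ bits, and each of the $\lceil\log k\rceil$ levels costs a constant number of bits for its $\max$-gate state and mark bit, so the total stays within $6\log k$ bits per node. Convergence is polynomial: each level converges in $O(n^3)$ steps by the meeting-time bound of Lemma~\ref{cat}, and since the levels are pipelined the entire tournament together with the final broadcast still converges in $O(n^3)$ collisions, so that every node eventually outputs the plurality color.
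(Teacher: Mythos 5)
Your overall strategy --- a depth-$\lceil\log k\rceil$ tournament of $\max$ gates stitched together by Lemma~\ref{comp}, with $\log k$ bits for a color plus $O(1)$ bits per level --- is exactly the paper's approach. However, your mechanism for moving winners up the tree contains a genuine flaw. You route a node's level-$\ell$ participation by its \emph{current} color label, pair labels ``that agree on all bits except bit $\ell$,'' and have ``the surviving output-$1$ nodes adopt the label of the winning color'' once ``a comparison at level $\ell$ resolves.'' Three things go wrong. First, in this model a gate never detectably resolves: convergence is only eventual, no node can ever certify that no further collisions will occur at a lower level, so ``relabel upon resolution'' is not a transition rule any node can execute, and a premature relabeling has no correction mechanism. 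Lemma~\ref{comp}'s mark bits exist precisely to absorb late lower-level collisions, but its accounting assumes the inputs to a gate keep their identities and charges; it says nothing about nodes switching which gate they feed, so identity propagation is \emph{not} ``absorbed into the one-bit-per-level overhead'' as you claim --- the lemma only composes counts, never identities. Second, the winner's identity cannot in general be ``read off from the sign of the surviving charge'': in the $\max$-gate implementation a collision leaves one node with charge $0$ and output bit $1$, so many output-$1$ nodes carry no sign, and the output-$1$ nodes need not even be of the winning color. Third, the pairing claim is false above the first level: a level-$\ell$ gate compares the winners of two subtrees, which agree on a high-order prefix but may differ in many low-order bits, not only in bit $\ell$.

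The paper's proof sidesteps all of this by never propagating winner identities inside the tree. A node of color $c$ participates, for its entire lifetime, in exactly the $\log k$ gates on the \emph{fixed} root-to-leaf path of $c$ (the gate at level $\ell$ is indexed by a prefix of $c$), so routing depends only on the original color and never changes, and the gates compute correct counts by Lemma~\ref{comp} without any node ever knowing which color is ahead. The plurality color's identity is recovered only at the end through a separate ``final color'' register: whenever $x$ meets a node $y$ that currently has output $1$ in every gate on its path, $x$ copies $y$'s original color into its final-color register; wrong early copies are overwritten, and eventually the plurality color is the only one that can be copied. This is also why the paper budgets $2\log k$ bits for colors (initial and final) plus $4$ bits per gate: your single mutable color label cannot serve simultaneously as the routing key and as the answer register. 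The fix is to keep your tournament and Lemma~\ref{comp} exactly as you have them, but freeze each node's color, route by its prefix, and add the paper's end-game copying rule.
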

\begin{proof} 
The complete binary tree $T$ with $k$ leaves has depth $\log k$ and computes the maximum of numbers $a_{1}, a_{2}, \cdots , a_{k}$ if 
$a_{i}$ is the input at the $i$-th leaf and the interior nodes of the tree are max gates. A node of color $c$ stores $4$ bits for every gate 
on the path from the root to the leaf of color $c$, and the composability lemma is used to simulate the comparison tree. 

The nodes store the initial and final color, the total memory being $6 \log k$ bits. A node of the plurality color 
eventually wins all the comparisons performed on the path from the root to the plurality color. If node $x$ interacts with $y$ such that $y$ has won all the 
rounds of comparison then $x$ copies the color of $y$ in the final color register. Eventually the plurality color is the only color that can be copied, so all nodes 
converge to the plurality color. 

The number of collisions required for the protocol to converge is $O(n)$ so the expected convergence time is $O(n^{3})$ by lemma \ref{cat}.

\end{proof} 

\section{Concluding remarks} 
The results demonstrate that the tasks of voting to find out the most popular choice out of $k$ and counting the number of nodes with a given property 
can be carried out by networks of anonymous agents with no information about the network structure.  It is surprising that a weak computational model 
suffices to compute aggregate statistics in a distributed manner, the protocols are efficient but the $O(n^{3})$ convergence time limits their utility as 
voting schemes. 

We restricted ourselves to networks with $o(\log n)$ memory as unique node identities can be established with high probability with $O(\log n)$ 
memory per agent. Computation over networks with unique identifiers has been studied extensively in the distributed computing literature and 
some properties of the underlying network can be learnt in this setting. We conjecture that no non trivial graph properties can be learnt in the anonymous setting. 

\subsection{Acknowledgement}
The first author thanks Goerg Seeling for references in the area of molecular computing and D. Moez for references to his work. 

\bibliographystyle{plain}
\bibliography{all.bib,my.bib}

\end{document}